\documentclass{article}
\usepackage[margin=1in]{geometry}
\usepackage[onehalfspacing]{setspace}
\usepackage{amsmath,mathtools}
\usepackage{pgfplots}
\usepackage{subcaption}
\usepackage{booktabs}
\usepackage[braket, qm]{qcircuit}
\usepackage{graphicx}
\usepackage{authblk}
\usepackage{xcolor}
\usepackage[linesnumbered,ruled,vlined]{algorithm2e}
\usepackage{qcircuit}
\usepackage{booktabs}
\usepackage{cite}

\SetCommentSty{mycommfont}

\SetKwInput{KwInput}{Input}                
\SetKwInput{KwOutput}{Output}              
\SetKwProg{Fn}{Function}{}{}

\newcommand{\meqref}[1]{\text{Eq}.~\eqref{#1}}
\newcommand{\mref}[1]{Sec.\,\,\!$\ref{#1} $}
\newcommand{\mfig}[1]{Fig.\,\,\!$\ref{#1} $}

\usepackage[english]{babel}

\usepackage{pdflscape}
\usepackage{fancyvrb}
\usepackage{calc}
\usepackage{rotating}
\usepackage{pgf,tikz}
\usepackage{mathrsfs}
\usetikzlibrary{arrows}
\usepackage{mathtools}

\DeclarePairedDelimiter\floor{\lfloor}{\rfloor}
\usepackage{listings}
\usepackage[]{qcircuit}
\usepackage{braket}
\usepackage{mathtools}
\usepackage{varwidth}
\usepackage{caption}
\usepackage{subcaption}
\usepackage{graphicx}
\usepackage[export]{adjustbox}

\makeatletter
\def\paragraph{\@startsection{paragraph}{4}%
	\z@\z@{-\fontdimen2\font}%
	{\normalfont\bfseries}}
\makeatother

\usepackage{graphicx}
\setcounter{secnumdepth}{3}
\setcounter{tocdepth}{3}
\usepackage{pgffor}
\usepackage{tikz,tikz-cd}
\usetikzlibrary{backgrounds,fit,decorations.pathreplacing}  
\usepackage{booktabs}
\usepackage{enumerate}

\usepackage{amssymb, amsmath, amsthm}

\usepackage{xypic}
\usepackage{scalerel}
\usepackage{ulem}


\usepackage{graphicx}              
\usepackage{amsmath}               
\usepackage{amsfonts}              
\usepackage{amsthm}                
\usepackage{xkeyval}               
\usepackage{amssymb}
\usepackage{enumerate}             
\usepackage{color}
\usepackage{breqn}                 
\usepackage{bm}                    

\overfullrule = 1cm

\allowdisplaybreaks[1]
\usepackage{nicefrac}
\usepackage{booktabs}

\usepackage{kpfonts}
\usepackage{stackengine}
\usepackage{calc}
\newlength\shlength
\newcommand\xshlongvec[2][0]{\setlength\shlength{#1pt}%
	\stackengine{-5.6pt}{$#2$}{\smash{$\kern\shlength%
			\stackengine{7.55pt}{$\mathchar"017E$}%
			{\rule{\widthof{$#2$}}{.57pt}\kern.4pt}{O}{r}{F}{F}{L}\kern-\shlength$}}%
	{O}{c}{F}{T}{S}}

\usepackage{hyperref}
\hypersetup{
	colorlinks   = true,    
	citecolor    = red      
}


\newcommand{\RN}[1]{%
	\textup{\uppercase\expandafter{\romannumeral#1}}%
}



\allowdisplaybreaks




\newtheorem{thm}{Theorem}[subsection]

\newtheorem{lem}[thm]{Lemma}

\newtheorem{cor}[thm]{Corollary}

\newtheorem{example}[thm]{Example}


\arraycolsep=3pt



 





\def\<{\langle}
\def\>{\rangle}









\numberwithin{equation}{section}

\newcommand{\abs}[1]{\lvert#1\rvert}


\begin{document}

	\title{A quantum algorithm for counting zero-crossings}

\author[1]{Alok Shukla \thanks{alok.shukla@ahduni.edu.in}}

\affil[1]{School of Arts and Sciences, Ahmedabad University, India}

	\date{}
	
	\maketitle

\begin{abstract}
We present a zero-crossings counting problem that is a generalization of the Bernstein–Vazirani problem. The goal of this problem is to count the number of zero-crossings (or sign changes) in a special type of sequence $ \mathcal{S} $, whose definition depends upon a secret string. A quantum algorithm is presented to solve this problem. The proposed quantum algorithm requires only one oracle query to solve the problem, whereas a classical algorithm would need at least $ n $ oracle queries, where $ 2^n $ is the size of the sequence $ \mathcal{S} $. In addition to solving the zero-crossings counting problem, we also give a quantum circuit for performing the Walsh-Hadamard transforms in sequency ordering. The Walsh-Hadamard transform in sequency ordering is used in a wide range of scientific and engineering applications, including in digital signal and image processing. Therefore, the proposed quantum circuit for computing the Walsh-Hadamard transforms in sequency ordering may be helpful in quantum computing algorithms for applications for which the computation of the Walsh-Hadamard transform in sequency ordering is required. 
\end{abstract}

	\section{Introduction}\label{sec:intro}

	The Bernstein–Vazirani algorithm \cite{bernstein1993quantum} is an example of a quantum algorithm that is superior to any classical algorithm in solving a special problem, the so-called Bernstein–Vazirani problem. We recall that the objective of the Bernstein–Vazirani problem  is to find the secret string $ s $ with $ s \in \{0,1\}^{n} $, assuming that a black-box oracle implementing the function $ f $ is given, where  $ {\displaystyle f\colon \{0,1\}^{n}\rightarrow \{0,1\}} $ is defined as $  f(x)=s\cdot x $. Here $ s \cdot x $ denotes the bit-wise dot product of  $s $ and $ x $ modulo $ 2 $, i.e.,  $s \cdot x =  s_{0}x_{0} + s_{1}x_{1}+ \ldots + s_{n-1}x_{n-1} \pmod 2 $, with $ s = s_{n-1}\,s_{n-2}\,\ldots \, s_1\, s_0 $, $ x =  x_{n-1}\,x_{n-2} \,\ldots \, x_1\, x_0 $, with $ s_i,\, x_i \in \{0,1\} $ for $ i=0,\,1,\, \ldots ,\,n-1$. 
	The Bernstein–Vazirani algorithm needs to query the oracle only once to find the secret string, whereas classically, at least $ n $ oracle queries are needed to determine the secret key.

	In this article, we present a special zero-crossings counting problem. This problem is a generalization of the Bernstein–Vazirani problem,  where the goal is to count the number of zero-crossings (or sign changes) in a special sequence whose definition depends upon a secret string. This special zero-crossings counting problem is precisely described in \mref{Sec:zero-crossing-problem}.  A quantum algorithm for solving the zero-crossings counting problem will be presented in \mref{sec:qunatum_solution}. 
	
	A slight modification of the quantum circuit used in solving the above-mentioned zero-crossings counting problem will allow us to obtain the Walsh-Hadamard transforms in sequency ordering. We note that usually the Walsh-Hadamard transform in natural order appears in quantum algorithms (for example, Deutsch-Jozsa algorithm \cite{deutsch1992rapid}, Bernstein–Vazirani algorithm \cite{bernstein1993quantum}, Simon's algorithm \cite{simon1997power}, Grover's algorithm \cite{grover1996fast}, Shor's Algorithm \cite{shor1999polynomial}, etc.), often to get a uniform superposition of quantum states at the beginning of the quantum algorithm.
	 The Walsh-Hadamard transform in natural order and the associated Walsh basis functions in natural order have also been used in the solution of non-linear ordinary differential equations \cite{SHUKLA2022127708}. On the other hand, the Walsh basis functions and the Walsh-Hadamard transforms in sequency ordering have found applications in several domains in engineering, for example, digital image and signal processing~\cite{kuklinski1983fast, zarowski1985spectral},  cryptography~\cite{lu2016walsh}, solution of non-linear ordinary differential equations and partial differential equations~\cite{beer1981walsh,ahner1988walsh,gnoffo2014global, gnoffo2015unsteady, gnoffo2017solutions}. In image processing applications, it is often desired to use  Walsh basis functions in sequency ordering because of better energy compaction properties. For instance, the hybrid classical-quantum algorithm approach for image processing discussed in \cite{Shukla2022} uses the Walsh-Hadamard transforms in sequency ordering. We note that in \cite{Shukla2022}, the quantum Walsh-Hadamard transform is performed in natural order, followed by classical computations to obtain the transforms in sequency ordering. In this and other applications wherein the Walsh-Hadamard transform in sequency ordering is needed, it is desirable to have a quantum circuit to compute the Walsh-Hadamard transforms directly in sequency ordering. In \mref{sec:Sequency-ordered-WH-transforms}, a quantum circuit for obtaining the Walsh-Hadamard transforms in sequency ordering will be presented and its working will be discussed. Finally, the conclusions will be summarized in \mref{sec:conclusion}.

	\subsection{Notation} \label{sec:notation}Before proceeding further, we define some convenient notations used in the rest of the paper.  
	\begin{itemize}
		\item $ \oplus $ : $ x \oplus y $ will denote $ x + y \mod 2 $.
	    \item $ s \cdot x $ : For $ s = s_{n-1}\,s_{n-2}\,\ldots \, s_1\, s_0 $ and $ x =  x_{n-1}\,x_{n-2} \,\ldots \, x_1\, x_0 $ with $ s_i,\, x_i \in \{0,1\}$, $ s \cdot x $ will denote the bit-wise dot product of  $s $ and $ x $ modulo $ 2 $, i.e.,  $s \cdot x :=  s_{0}x_{0} + s_{1}x_{1}+ \ldots + s_{n-1}x_{n-1} \pmod 2 $.
	    \item  $ s(m) $ : For  $ s =  s_{n-1}\,s_{n-2}\,\ldots\, s_2\,s_1\,s_0 $ and $ 1 \leq m \leq n $, $ s(m) $ denotes the string formed by keeping only the $ m $ least significant bits of $ s $, i.e., $ s(m) = s_{m-1}\,s_{m-2}\,\ldots\, s_2\,s_1\,s_0 $. 
	    \item On a few occasions,  by abuse of notations, a non-negative integer $ s $, such that  $ s = \sum_{j=0}^{n-1} \, s_j 2^j $, will be used to represent the $ n $-bit string
	    $ s_{n-1}\,s_{n-2}\,\ldots\, s_2\,s_1\,s_0 $.   
	\end{itemize}

\section{Zero-crossings counting problem} \label{Sec:zero-crossing-problem}
Let the function $ {\displaystyle F \colon \{0,1\}^{n}\rightarrow \{1,-1\}} $ be defined as $ F(x) = (-1)^{f(x)}$,  where $ f(x)= s \cdot x $ for some fixed secret string $ s \in \{0,1\}^{n}  $. The number of zero-crossings (i.e., sign changes) for the sequence 
		\begin{equation}\label{eq:def_sequence}
			\mathcal{S} = 	\left(\, F(0), \, F(1),\, F(2),\, \cdots \cdots ,\, F(2^n-1) \,\right)
		\end{equation}
is defined as 	\begin{equation}\label{Eq:defn_zero_crossings}
	\frac{1}{2} \sum_{k=0}^{N-2} \, \abs{   F(k+1) - F(k)}
	=  \frac{1}{2} \sum_{k=0}^{N-2} \, \abs{   (-1)^{s \cdot (k+1)} - (-1)^{s \cdot k}},
\end{equation}
where $ N =2^n $. 
The numbers of zero-crossings for sequences associated with different secret strings (for $ n=3 $) are listed in Table \ref{tab:zero-crosings-table}.
\begin{table}[]
	\centering
	\begin{tabular}{@{}ccc@{}}
		\toprule
		Secret string, s & Sequence $ \mathcal{S} = ( \, F(k) \, )_{k=0}^{7} $ & Number of zero-crossings (sign changes) \\ \midrule
		$ 000 $	& $ \displaystyle  \left( \, 1, \, 1, \, 1, \, 1, \, 1, \, 1, \, 1, \, 1        \, \right)$  &     $ 0 $                                  \\
		$ 001 $	& $ \displaystyle  \left( \,  1, -1,  1, -1,  1, -1,  1, -1 \, \right) $  &     $ 7 $                                  \\
		$ 010 $	& $ \displaystyle  \left(\,  1,  1, -1, -1,  1,  1, -1, -1 \, \right) $  &      $ 3 $                                \\
		$ 011 $	& $ \displaystyle  \left( \, 1, -1, -1,  1,  1, -1, -1,  1 \, \right) $  &      $ 4 $                                 \\
		$ 100 $	& $ \displaystyle  \left( \, 1,  1,  1,  1, -1, -1, -1, -1 \, \right) $  &      $ 1 $                                 \\
		$ 101 $	& $ \displaystyle  \left( \, 1, -1,  1, -1, -1,  1, -1,  1 \, \right) $  &      $ 6 $                                \\
		$ 110 $	& $ \displaystyle  \left( \, 1,  1, -1, -1, -1, -1,  1,  1 \, \right) $  &      $ 2 $                                 \\
		$ 111 $	& $ \displaystyle  \left( \, 1, -1, -1,  1, -1,  1,  1, -1   \, \right) $  &     $ 5 $                                     \\ \bottomrule
	\end{tabular}
	\caption{The table shows the number of zero-crossings (sign changes) for the sequence $ ( \, F(k) \, )_{k=0}^{7} $, with $ F(k) = (-1)^{s\cdot x} $, for $ s =  000, \, 001,\, \ldots,\, 111 $. }
	\label{tab:zero-crosings-table}
\end{table}
The computation of the number of zero-crossings for the sequence associated with the secret string  $ s=101 $ is illustrated in Ex.\,\,\!$ \ref{ex:one}$.
\begin{example} \label{ex:one}
	Let $ n = 3 $ and $ s = 101 $ (or equivalently $ s = 5$). 
	Clearly, 
	\begin{align*}
		s\cdot 0 &= (101)\cdot (000) = (1 \times 0 ) \, \oplus \,  (0 \times 0 ) \, \oplus \,  (1 \times 0 ) \,  =  0, \\
		s\cdot 1 &= (101)\cdot (001) = (1 \times 0 ) \, \oplus \,  (0 \times 0 ) \, \oplus \,  (1 \times 1 ) \,   =  1, \\
		s\cdot 2 &= (101)\cdot (010) = (1 \times 0 ) \, \oplus \,  (0 \times 1 ) \, \oplus \,  (1 \times 0 ) \,   =  0, \\
		\cdots & \cdots \\
		s\cdot 7 &= (101)\cdot (111) = (1 \times 1) \, \oplus \,   (0 \times 1) \, \oplus \,   (1 \times 1)  \, = 0. 
	\end{align*}
	Then $$ \mathcal{S} = \left((-1)^{s\cdot 0}, \,(-1)^{s\cdot 1},\, (-1)^{s\cdot 2},\, \ldots \,,\,   (-1)^{s\cdot 7} \right) =  \left( \, 1, -1,  1, -1, -1,  1, -1,  1 \, \right) .$$ 
	Let $ N= 2^n = 8 $. The number of zero-crossings of the sequence $ \mathcal{S} $ is given by 
	\[
	\frac{1}{2} \sum_{k=0}^{N-2} \, \abs{   (-1)^{s \cdot (k+1)} - (-1)^{s \cdot k}} = 
	\frac{1}{2} \left( \abs{-1 -1 } + \abs{1 - (-1)} + \abs{-1 - 1)} +  \abs{-1 - (-1)} + \abs{1 - (-1)}  + \abs{-1 -1} + \abs{1 - (-1)} \right) = 6.
	\]
\end{example}

Assuming that a black-box oracle implementing the function $ f(x) = {s \cdot x }$ for some fixed secret string $ s \in \{0,1\}^{n}  $ is given, the objective in the \textit{zero-crossings counting problem} is to find the number of zero-crossings for the associated sequence $ \mathcal{S} $ as defined in \meqref{eq:def_sequence}, making as few calls to oracle as possible. 
	
Clearly, any classical algorithm would require at least $ n $ oracle queries to find the number of zero-crossings. 
We note that one can use the Bernstein–Vazirani algorithm to find $s$; however, it will not directly give the number of  zero-crossings  and further classical computations will be needed to determine the number of  zero-crossings for the associated sequence $ \mathcal{S} $. We give a fully quantum solution for this problem in \mref{sec:qunatum_solution}, which requires only one oracle call to solve this problem and directly determines the number of zero-crossings for the sequence $ \mathcal{S} $. This quantum algorithm may be considered as the post-processing of the output of Bernstein-Vazirani algorithm. Indeed, this post-processing can also be carried out on a classical computer equally efficiently once $s$ is known. 
However, the primary contribution of this work lies in demonstrating that the same quantum circuit utilized for the post-processing of the Bernstein-Vazirani algorithm can also be leveraged to compute the Walsh-Hadamard transform in sequency ordering. As noted earlier, Walsh-Hadamard transform in sequency ordering is used in many scientific and engineering applications, including in digital signal and image processing,  cryptography and solution of non-linear ordinary differential equations and partial differential equations, etc.

\section{A quantum solution for zero-crossings counting problem} \label{sec:qunatum_solution}
In this section, we will describe our quantum circuit and algorithm (Algorithm~$ \ref{alg_Zero_Counting} $) for the solution of the zero-crossings counting problem. We will use Lemma \ref{lemma} to prove the correctness of Algorithm~$ \ref{alg_Zero_Counting} $.

\subsection{A quantum algorithm} \label{sec:algorithm}
Let $ s $ be fixed secret string of length $ n $, i.e., $ s \in \{0,1\}^{n}$. The goal is to find the zero-crossings of the sequence $ \mathcal{S} $ (as defined in \meqref{eq:def_sequence}). A schematic quantum circuit for the solution of this problem for an arbitrary $ n $  is given in \mfig{fig:full}. A complete quantum circuit for $ n=6 $ is given in \mfig{fig:full-fig-six} for illustration. We note that on excluding the last part of the circuit (separately shown in  \mfig{fig:last-part}), the remaining quantum circuit in \mfig{fig:full} is exactly the same as used in the original Bernstein–Vazirani algorithm. 
We also note that the action of oracle $ U_f $ on the input state $ \ket{x} \otimes \ket{y}  $ is given by
\begin{equation}\label{eq:oracle}
	U_f \ket{x} \otimes \ket{y} = \ket{x} \otimes \ket{y \oplus f(x)}.
\end{equation}
As noted in \mref{sec:notation}, the symbol $ \oplus $ denotes addition modulo $ 2 $, i.e., $ x \oplus y =  x + y \pmod 2 $.
The input in \mfig{fig:full} is prepared to be in the quantum state $ \ket{\psi_0} = \ket{0}^{\otimes n} \otimes \ket{1}   $, which means each of the first $ n $ qubits is in the $ \ket{0} $ state and the last ancilla qubit is in the state $ \ket{1} $. Application of $ H^{\otimes n} \otimes H $ on $ \ket{\psi_0} $ results in the state $ \ket{\psi_1}  = \left(H^{\otimes n} \otimes H \right)\ket{\psi_0} =  \frac{1}{\sqrt{N}}\sum_{k=0}^{N-1} \,  \ket{k} \otimes \ket{-}. $ Next the action of $ U_f $ on $ \ket{\psi_1}$ results in the state
\begin{equation}
	\ket{\psi_2} = U_f \ket{\psi_1} = \frac{1}{\sqrt{N}}\sum_{k=0}^{N-1} \, (-1)^{s \cdot k} \ket{k} \otimes \ket{-}. 
\end{equation}
\begin{figure}
	\centering
	\includegraphics[scale=1.05]{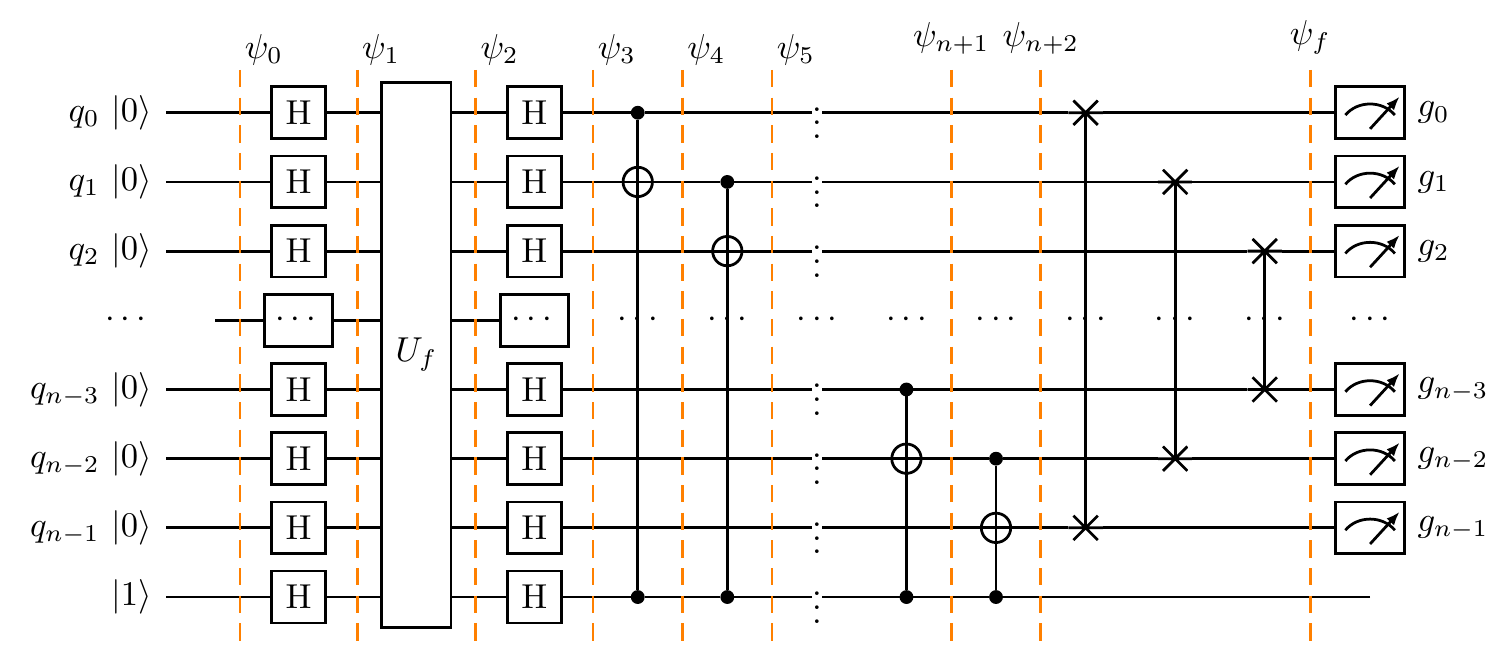}
	\caption{A schematic quantum circuit for counting zero-crossings for the sequence $ \mathcal{S} $ as defined in \meqref{eq:def_sequence}. All the $ \floor{\frac{n}{2}} $ swap gates at the end should be applied in parallel to reduce the circuit depth, but for clarity they are not shown to be in parallel in the figure.} \label{fig:full}
\end{figure}
Subsequently, an application of $ H^{\otimes n} \otimes H  $ on $ \ket{\psi_2} $ gives
\begin{align}\label{eq:Psi}
	\ket{\psi_3} = (H^{\otimes n} \otimes H ) \ket{\psi_2} &= \frac{1}{N} \, \sum_{k=0}^{N-1} \sum_{j=0}^{N-1} \, (-1)^{(  s \oplus j ) \cdot k} \ket{j} \otimes \ket{1} \nonumber\\
	&= \ket{s} \otimes \ket{1}.
\end{align}
The quantum circuit to obtain the state $ 	\ket{\psi_3} $ is the same as the quantum circuit in the original Bernstein–Vazirani algorithm and  \meqref{eq:Psi}  follows exactly as in the original Bernstein–Vazirani algorithm (clearly, if $ j =s $, then $ j \oplus s = 0 $, which means 
$ \frac{1}{ N }      \sum_{k =0}^{N-1} \, \left( (-1)^{(s \oplus j) \cdot k }  \right) \ket{j} \otimes \ket{1}  = \frac{1}{ N }      \sum_{k =0}^{N-1} \, \left( (-1)^{(0) \cdot k }  \right) \ket{s} \otimes \ket{1} =  \ket{s} \otimes \ket{1}$. Therefore, if  $ j \neq s $, then the corresponding coefficient of $ \ket{j}\otimes \ket{1} $ must be $ 0 $).

\begin{figure}
	\centering
	\includegraphics[scale=1.2]{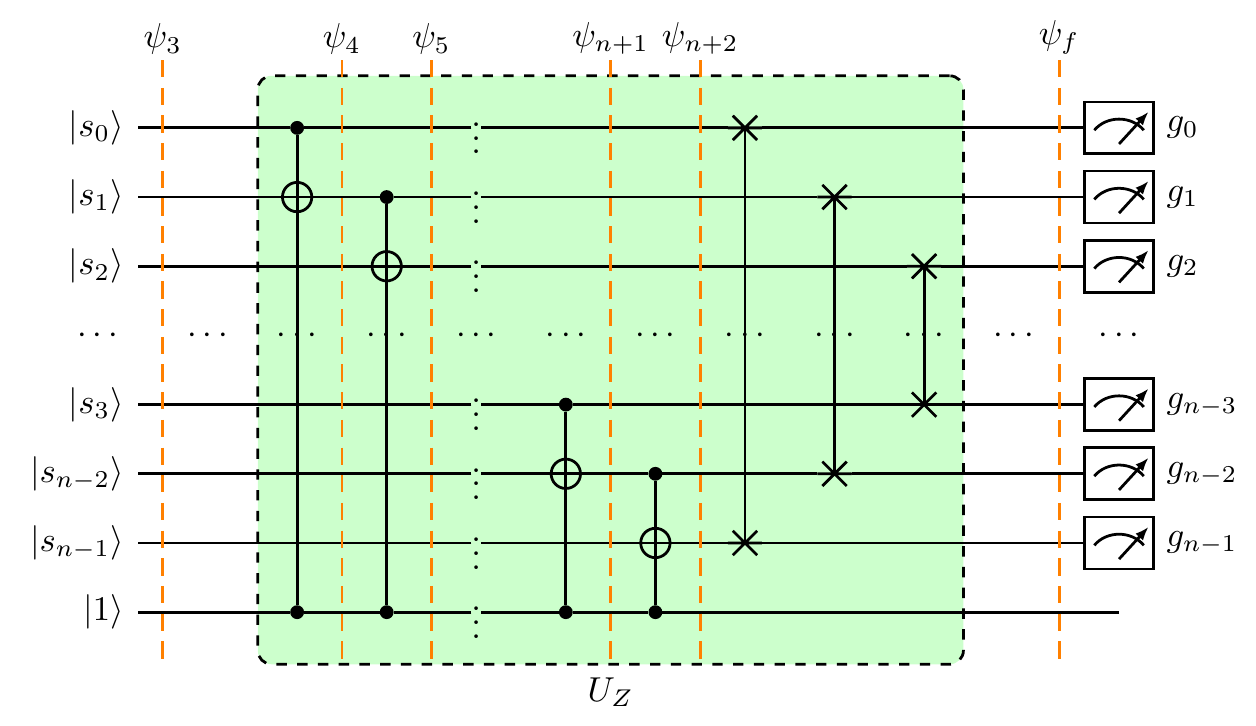}
	\caption{ Part of the quantum circuit for counting zero-crossings for the sequence $ \mathcal{S} $ as defined in \meqref{eq:def_sequence}. The quantum circuit $ U_Z $ consists of $ n-1 $ Toffoli gates and $ \floor{\frac{n}{2}} $ swap gates. } \label{fig:last-part}
\end{figure}

\begin{figure}
\centering
\includegraphics[scale=1.2]{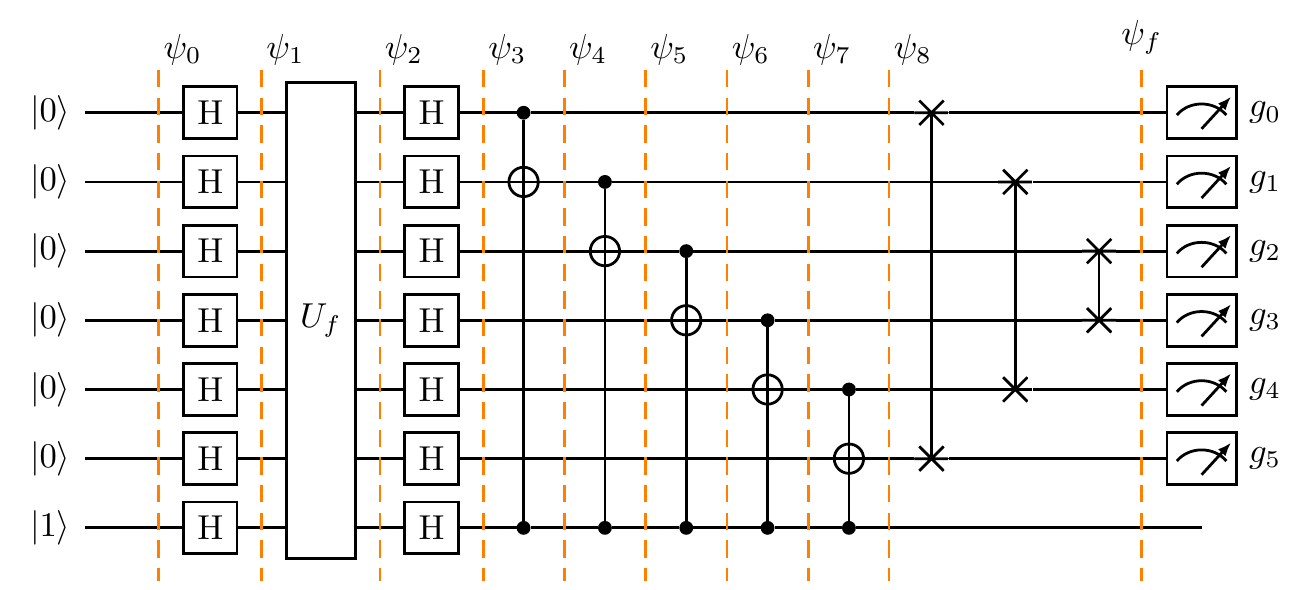}
\caption{Quantum circuit for counting zero-crossings for the sequence $ \mathcal{S} $ as defined in \meqref{eq:def_sequence} for $ n=6 $ qubits.} \label{fig:full-fig-six}
\end{figure}

Assume $ \ket{s} = \ket{s_{n-1}} \, \otimes \, \ket{s_{n-2}} \, \otimes \, \cdots \, \otimes \,  \ket{s_{1}} \, \otimes \ket{s_{0}}  $, i.e., assume the decimal representation of $ s $ to be  $ \sum_{m=0}^{n-1} \, s_m 2^m  $.
The next part of the quantum circuit in \mfig{fig:full} contains $ n-1 $ Toffoli gates (i.e., CCNOT gates) and $ \floor{\frac{n}{2}} $ swap gates. The swap gates should be applied in parallel. However, for clarity, they are not shown to be parallel in \mfig{fig:full}. The part of the quantum circuit consisting of Toffoli and swap gates is separately illustrated in \mfig{fig:last-part} and in the figure the action of Toffoli and swap gates is represented as $ U_Z $. The swap gates in \mfig{fig:last-part} simply reverse the order of qubits. The action of $ U_Z $ in the quantum circuit in \mfig{fig:last-part} can be described as 
 $ U_Z \ket{\psi_3} = \ket{g} \otimes \ket{1}$, where  \begin{equation}\label{eq:g}
 	g = \sum_{j=0}^{n-1} \, g_j 2^j,
 \end{equation}
and the bits of $ g $ are given by
\begin{align*}
	g_{n-1} & = \, s_0, \\
	g_{n-2} & =  \, s_0 \, \oplus \, s_1, \\ 
	\cdots&  \,\cdots \\
	g_1 &=  \, s_0 \, \oplus \, s_1 \, \oplus \, s_2 \, \oplus \, \ldots  \, \oplus \, s_{n-2}, \\
	g_0 &= \, \, s_0 \, \oplus \, s_1 \, \oplus \, s_2 \, \oplus \, \ldots  \, \oplus \, s_{n-2} \, \oplus \, s_{n-1}.
\end{align*}  
It means,
\begin{align}\label{eq:gk}
	g_{n-1} = s_0,  \quad \text{and} \quad	g_k &= \, \, s_0 \, \oplus \, s_1 \, \oplus \, \ldots   \, \oplus \, s_{n-1-k}, \quad \text{for $ k = n-2 $ to $ 0 $}.
\end{align}
 Finally, all the qubits (except the ancilla qubit) are measured and stored in a classical register. The classical bits resulting from these measurements are labeled $ g_0 $ to $ g_{n-1} $ in \mfig{fig:last-part}. Indeed,  $ g $ is number of zero-crossings of the sequence $ \mathcal{S} $ (as defined in \meqref{eq:def_sequence}). 
We recall that, the number of zero-crossings of the sequence $ \mathcal{S} $ (as defined in \meqref{eq:def_sequence}) is given by 
\[
 \frac{1}{2} \sum_{k=0}^{N-2} \, \abs{   (-1)^{s \cdot (k+1)} - (-1)^{s \cdot k}},
\]
as noted in  \meqref{Eq:defn_zero_crossings}.
It follows from Corollary \ref{cor} that $ g = Z_n(s) = \frac{1}{2} \sum_{k=0}^{N-2} \, \abs{   (-1)^{s \cdot (k+1)} - (-1)^{s \cdot k}}$ is the number of number of zero-crossings of the sequence $ \mathcal{S} $ that we wanted to determine.

The above discussion is captured in the following algorithm, i.e., Algorithm \ref{alg_Zero_Counting}. We note that Algorithm \ref{alg_Zero_Counting} always succeeds and requires only one query to the oracle $ U_f $ for computing the number of zero-crossings for the sequence $ \mathcal{S}$. \\

	\begin{algorithm}[H] \label{alg_Zero_Counting}
	\DontPrintSemicolon
	\KwInput{
		A black box oracle $ U_f $ that performs the transformation $ U_f \ket{x} \otimes \ket{y} = \ket{x} \otimes \ket{y \oplus f(x)} $ 	for $  x \in {0, \, \ldots \, ,\, N - 1} $, with $ N=2^n $ and  $ f(x)= s \cdot x $ for some fixed secret string $ s \in \{0, \, 1\}^{n}  $. }
	\KwOutput{The number of zero-crossings for the sequence $ \mathcal{S} = 	\left(\, F(0), \, F(1),\, F(2),\, \cdots \cdots ,\, F(2^n-1) \,\right) $, where $ F(x) = (-1)^{f(x)} $.
	}
	{
		$ \ket{\psi_0} = \ket{0}^{\otimes n} \otimes \ket{1}$. \tcp*{Initialization.}
		$  \ket{\psi_1} = \left(H^{\otimes n} \otimes H \right)\ket{\psi_0} =  \frac{1}{\sqrt{N}}\sum_{k=0}^{N-1} \,  \ket{k} \otimes \ket{-} $. \tcp*{Create superposition using Hadamard gates.} 
		$ \ket{\psi_2} = U_f \ket{\psi_1} = \frac{1}{\sqrt{N}}\sum_{k=0}^{N-1} \, (-1)^{s \cdot k} \ket{k} \otimes \ket{-} $. \tcp*{Apply $ U_f $.}
		$ \ket{\psi_3} = (H^{\otimes n} \otimes H ) \ket{\psi_2} = \frac{1}{N} \, \sum_{k=0}^{N-1} \sum_{j=0}^{N-1} \, (-1)^{(  s \oplus j ) \cdot k} \ket{j} \otimes \ket{1} = \ket{s} \otimes \ket{1}. $  \tcp*{Perform the Hadamard transform.} 
		$ \ket{\psi_f}  = U_Z \ket{\psi_3} = \ket{g} \otimes \ket{1}$.  \tcp*{Apply $ U_Z $ (ref.\,\,\!\mfig{fig:last-part}).}
		Measure the top $ n $ qubits to obtain $ g $.   \tcp*{$ g $ gives the number of zero-crossings for the sequence $ \mathcal{S}$.}
}
\caption{A quantum algorithm for computing the number of zero-crossings in the sequence	$ \displaystyle{\mathcal{S}} $ (ref.~\meqref{eq:def_sequence}).}
\end{algorithm}

The circuit depth (see \mfig{fig:full}) required for Algorithm \ref{alg_Zero_Counting} is $ \mathcal{O}(\log_2(N)) $, where $ N $ is the size of the sequence $ \displaystyle{\mathcal{S}} $. We also note that the swap gates in \mfig{fig:full} can be omitted from the circuit if one reads the bits of output $ g $ in the reverse order.

\subsection{A lemma} \label{sec:Lemma} 
 In this section, we will prove Lemma~\ref{lemma} and its corollary (Corollary \ref{cor}) to show that  $ g = \sum_{j=0}^{n-1} \, g_j 2^j $ (see  \meqref{eq:g}  and  \meqref{eq:gk}) gives the number of of zero-crossings in the sequence $ \mathcal{S} $ given by 
 \[
 \mathcal{S} = 	\left(\, F(0), \, F(1),\, F(2),\, \cdots \cdots ,\, F(2^n-1) \,\right),
 \]
 with $ F(x) = (-1)^{f(x)} $ and $ f(x) = s \cdot x $ (ref.\,\,\!\meqref{eq:def_sequence}). 
 
\begin{lem} \label{lemma}
	For an integer $ m $ with $ 1 < m \leq n $, and for 
	$ x = x_{n-1}\,x_{n-2} \,\ldots \, x_1\, x_0 $ (or equivalently, $ x $ with a decimal representation  $  x = \sum_{j=0}^{m-1} \, x_j 2^j  $),
	 with $ x_j \in \{0,1\} $,   define $ Z_m(x) $ as
	\begin{equation}\label{eq:def_Z_k}
		Z_m(x) := \frac{1}{2} \sum_{k=0}^{2^m-2} \, \abs{   (-1)^{x \cdot (k+1)} - (-1)^{x \cdot k}}.
	\end{equation}
Then, we have 
\begin{equation}\label{eq:lemma}
Z_m(s(m)) = 2 Z_{m-1} (s(m-1)) + (s_0 \, \oplus \, s_1 \, \oplus \, s_2 \, \oplus \, \ldots \, \oplus \,  s_{m-1}).	
\end{equation}
Here $ s(m) = s_{m-1}\,s_{m-2}\,\ldots \, s_1\, s_0 $
and $ s_0 \, \oplus \, s_1 \, \oplus \, s_2 \, \oplus \, \ldots \, \oplus \,  s_{m-1}= (s_0 + s_1 + s_2 + \ldots + s_{m-1}) \pmod 2 $. (ref.\,\,\!\mref{sec:notation}). 
\end{lem}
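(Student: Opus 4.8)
The plan is to first reinterpret each term of the sum defining $Z_m$ as an indicator of a sign change, and then exploit the self-similar structure of the sequence $\mathcal{S}$ under halving. For any $a,b \in \{0,1\}$ one has $\tfrac{1}{2}\abs{(-1)^a - (-1)^b} = a \oplus b$, so by bilinearity of the mod-$2$ dot product (using $(k+1)_i - k_i \equiv (k+1)_i \oplus k_i \pmod 2$ bitwise),
\[
\tfrac{1}{2}\abs{(-1)^{x\cdot(k+1)} - (-1)^{x\cdot k}} = \big(x\cdot(k+1)\big) \oplus \big(x\cdot k\big) = x\cdot\big((k+1)\oplus k\big),
\]
where $(k+1)\oplus k$ denotes the bitwise XOR. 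Thus $Z_m(x)$ counts the positions where consecutive entries of the associated $\pm1$-sequence differ.

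Next I would split the index range $0 \le k \le 2^m - 2$ of the sum defining $Z_m(s(m))$ according to the top bit of $k$. Writing $k = k'$ for $0 \le k' \le 2^{m-1}-1$ (top bit $0$) and $k = 2^{m-1}+k'$ (top bit $1$), we get $s(m)\cdot k = s(m-1)\cdot k'$ in the first case and $s(m)\cdot k = s_{m-1} \oplus \big(s(m-1)\cdot k'\big)$ in the second. Hence the second half of $\mathcal{S}$ is the first half multiplied by the global sign $(-1)^{s_{m-1}}$, which leaves the number of internal sign changes unchanged. Therefore the sign changes strictly inside the first half and strictly inside the second half each contribute exactly $Z_{m-1}(s(m-1))$, accounting for the term $2\,Z_{m-1}(s(m-1))$ in \eqref{eq:lemma}.

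It then remains to evaluate the single boundary term, the sign change between index $2^{m-1}-1$ and index $2^{m-1}$. Since $2^{m-1}-1$ is the all-ones string on bits $0,\dots,m-2$, we have $s(m)\cdot(2^{m-1}-1) = s_0\oplus\cdots\oplus s_{m-2}$, while $s(m)\cdot 2^{m-1} = s_{m-1}$; so this term equals $1$ exactly when $s_0\oplus\cdots\oplus s_{m-2} \ne s_{m-1}$, i.e.\ when $s_0\oplus s_1\oplus\cdots\oplus s_{m-1} = 1$, which is precisely the claimed parity. Summing the three contributions gives \eqref{eq:lemma}.

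I do not expect a serious obstacle; this is essentially a bookkeeping argument. The points requiring care are (i) justifying that multiplying a $\pm1$ sequence by a constant $\pm1$ preserves the zero-crossing count, and (ii) correctly reading off the binary expansions of $2^{m-1}-1$ and $2^{m-1}$ so the boundary term comes out as the stated XOR. As an alternative to the halving recursion, one can compute the closed form $Z_m(x) = \sum_{t=0}^{m-1} 2^{m-1-t}\,(s_0\oplus\cdots\oplus s_t)$ directly, by substituting $j = k+1$ and noting that $x\cdot\big(j\oplus(j-1)\big) = s_0\oplus\cdots\oplus s_{\nu_2(j)}$ together with the count $\#\{\,1\le j\le 2^m-1 : \nu_2(j)=t\,\} = 2^{m-1-t}$; then \eqref{eq:lemma} follows by comparing the formulas for $Z_m(s(m))$ and $Z_{m-1}(s(m-1))$.
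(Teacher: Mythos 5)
Your proposal is correct and follows essentially the same route as the paper's proof: both split the sum at $k=2^{m-1}-1$ into the first half, the single boundary term, and the second half, show the second half equals the first because the global factor $(-1)^{s_{m-1}}$ cancels inside the absolute value, identify the first half with $Z_{m-1}(s(m-1))$ since only the $m-1$ least significant bits of $s$ enter, and evaluate the boundary term as $s_0\oplus\cdots\oplus s_{m-1}$ from the binary expansions of $2^{m-1}$ and $2^{m-1}-1$. Your preliminary reformulation $\tfrac12\abs{(-1)^a-(-1)^b}=a\oplus b$ and the sketched closed-form alternative via $\nu_2(j)$ are minor additions, not a different argument.
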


\begin{proof}
Let $ M = 2^m $. We put $ x=s(m) $ and split the summation on the right side of  \meqref{eq:def_Z_k}  into three different parts as follows.
\begin{align}	\label{eq:lem_all_terms}
	&Z_m(s(m)) \nonumber \\ &= \frac{1}{2} \left(\sum_{k=0}^{\frac{M}{2}-2} \, \abs{   (-1)^{s(m) \cdot (k+1)} - (-1)^{s(m) \cdot k}}\right) + \frac{1}{2} \abs{  (-1)^{s(m) \cdot \left(\frac{M}{2}\right)} - (-1)^{s(m) \cdot \left(\frac{M}{2} -1 \right) }} + \frac{1}{2} \sum_{k=\frac{M}{2}}^{M-2} \, \abs{   (-1)^{s(m) \cdot (k+1)} - (-1)^{s(m) \cdot k}}.
\end{align} 
We note that $ \frac{M}{2} = 2^{m-1}$ represents the $ m $-bit string $ 10\ldots0 $. Therefore, $ s(m) \cdot \left(\frac{M}{2}\right) = s_{m-1} $. Similarly,  $ \frac{M}{2} -1 = 2^{m-1}-1$ represents the $ m $-bit string $ 011\ldots1 $, hence  $ s(m) \cdot \left(\frac{M}{2} -1 \right) =  s_0 \, \oplus \, s_1 \, \oplus \, s_2 \, \oplus \, \ldots \, \oplus \,  s_{m-2}$.
Therefore, the middle term in the above summation reduces to 
\begin{align}\label{eq:lem_second_term}
	\frac{1}{2} \abs{  (-1)^{s(m) \cdot \left(\frac{M}{2}\right)} - (-1)^{s(m) \cdot \left(\frac{M}{2} -1 \right) }} & = \frac{1}{2} \abs{  (-1)^{s_{m-1}} - (-1)^{ s_0 \, \oplus \, s_1 \, \oplus \, \ldots \, \oplus \,  s_{m-2} }} \nonumber \\ & = s_0 \, \oplus \, s_1 \, \oplus \, s_2 \, \oplus \, \ldots \, \oplus \,  s_{m-1}.
\end{align}
Next we show that the last and the first terms are equal. 
\begin{align}\label{eq:lem_last_term}
\frac{1}{2} \sum_{k=\frac{M}{2}}^{M-2} \, \abs{   (-1)^{s(m) \cdot (k+1)} - (-1)^{s(m) \cdot k}} & = 	\frac{1}{2} \sum_{k=0}^{\frac{M}{2}-2} \, \abs{   (-1)^{s(m) \cdot (\frac{M}{2} + k+1)} - (-1)^{s(m) \cdot (\frac{M}{2} + k)}}  \nonumber \\
  & =   \frac{1}{2} \sum_{k=0}^{\frac{M}{2}-2} \, \abs{   (-1)^{s_{m-1}}  \left((-1)^{s(m) \cdot (  k+1)} - (-1)^{s(m) \cdot (k)}  \right)  } \nonumber \\
  & =  \frac{1}{2} \sum_{k=0}^{\frac{M}{2}-2} \, \abs{    \left((-1)^{s(m) \cdot (  k+1)} - (-1)^{s(m) \cdot (k)}  \right)  }. 
\end{align}
From \meqref{eq:lem_all_terms}, \meqref{eq:lem_second_term} and \meqref{eq:lem_last_term} it follows that
\begin{align}
Z_m(s(m)) & = \left( \sum_{k=0}^{\frac{M}{2}-2} \, \abs{   (-1)^{s(m) \cdot (k+1)} - (-1)^{s(m) \cdot k}}\right) + (s_0 \, \oplus \, s_1 \, \oplus \, \ldots \, \oplus \,  s_{m-1}) \nonumber \\
& =  \left( \sum_{k=0}^{\frac{M}{2}-2} \, \abs{   (-1)^{s(m-1) \cdot (k+1)} - (-1)^{s(m-1) \cdot k}}\right) + (s_0 \, \oplus \, s_1 \, \oplus \, \ldots \, \oplus \,  s_{m-1}).  
\end{align}
The last step follows because as $ k $ runs through $ 0$ to  $ \frac{M}{2} -2 = 2^{m-1} -2 $, the computations of  $ s(m) \cdot (k+1) $ and $ s(m) \cdot k $  involve only the $ m-1 $ least significant bits of $ s $, allowing one to write $ s(m) \cdot (k+1) = s(m-1) \cdot (k+1) $ and $ s(m) \cdot k = s(m-1) \cdot k $. 
Hence, we obtain
\begin{align*}
Z_m(s(m)) = 2 Z_{m-1} (s(m-1)) + (s_0 \, \oplus \, s_1 \, \oplus \, s_2 \, \oplus \, \ldots \, \oplus \,  s_{m-1}),
\end{align*}
and the proof is complete.
\end{proof}

\begin{cor}\label{cor}
	Let  $ s = s_{n-1}\,s_{n-2}\,\ldots \, s_1\, s_0 $ with $ s_j \in \{0,1\} $. If $ 	Z_n(s) $ is defined as 
	\[
		Z_n(s) := \frac{1}{2} \sum_{k=0}^{2^n-2} \, \abs{   (-1)^{s \cdot (k+1)} - (-1)^{s \cdot k}},
	\]
	then 
	\begin{equation}\label{eq:cor}
		Z_n(s) = \sum_{k=0}^{n-1} \, g_k 2^k,
	\end{equation}
where $ g_{n-1} = s_0 $ and  $ g_k =  s_0 \, \oplus \, s_1 \, \oplus \, \ldots \,  \oplus \, s_{n-1-k}\,$ for $ k=n-2$ to $ k=0 $.
\end{cor}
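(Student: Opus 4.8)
The plan is to derive Corollary~\ref{cor} from Lemma~\ref{lemma} by induction on $n$. I would set up the induction on the parameter $m$ appearing in $Z_m$, with the base case handled directly and the inductive step supplied verbatim by the recurrence in \meqref{eq:lemma}. The key observation is that the recurrence $Z_m(s(m)) = 2Z_{m-1}(s(m-1)) + (s_0 \oplus s_1 \oplus \cdots \oplus s_{m-1})$ is precisely the statement that, in binary, the number $Z_n(s)$ is built up bit-by-bit: each application of the recurrence shifts the previous value left by one position (the factor of $2$) and deposits a new least-significant bit equal to the parity $s_0 \oplus s_1 \oplus \cdots \oplus s_{m-1}$.

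First I would check the base case. The cleanest choice is $m=1$: for a one-bit string $s(1) = s_0$, one has $Z_1(s(1)) = \frac{1}{2}\abs{(-1)^{s_0 \cdot 1} - (-1)^{s_0 \cdot 0}} = \frac{1}{2}\abs{(-1)^{s_0} - 1} = s_0$. (Alternatively one could take $m=2$ and read off the table, but $m=1$ is more economical, and Lemma~\ref{lemma} is stated for $1 < m \le n$, so it applies for all the inductive steps from $m=2$ up to $m=n$.) Then I would assume as inductive hypothesis that for some $m$ with $1 \le m < n$,
\[
Z_m(s(m)) = \sum_{k=0}^{m-1} g^{(m)}_k 2^k,
\]
where $g^{(m)}_{m-1} = s_0$ and $g^{(m)}_k = s_0 \oplus s_1 \oplus \cdots \oplus s_{m-1-k}$ for $k = m-2$ down to $0$.

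Next I would apply Lemma~\ref{lemma} with $m$ replaced by $m+1$ to get $Z_{m+1}(s(m+1)) = 2 Z_m(s(m)) + (s_0 \oplus s_1 \oplus \cdots \oplus s_m)$. Substituting the inductive hypothesis, $2 Z_m(s(m)) = \sum_{k=0}^{m-1} g^{(m)}_k 2^{k+1} = \sum_{k=1}^{m} g^{(m)}_{k-1} 2^k$, and adding the parity term $(s_0 \oplus \cdots \oplus s_m)$ as the new $2^0$ coefficient, I obtain $Z_{m+1}(s(m+1)) = \sum_{k=0}^{m} g^{(m+1)}_k 2^k$ with $g^{(m+1)}_0 = s_0 \oplus s_1 \oplus \cdots \oplus s_m$ and $g^{(m+1)}_k = g^{(m)}_{k-1}$ for $k = 1$ to $m$. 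A quick index bookkeeping check then confirms $g^{(m+1)}_m = g^{(m)}_{m-1} = s_0$ and, for $1 \le k \le m-1$, $g^{(m+1)}_k = g^{(m)}_{k-1} = s_0 \oplus \cdots \oplus s_{m-k} = s_0 \oplus \cdots \oplus s_{(m+1)-1-k}$, which matches the claimed form with $n$ replaced by $m+1$. Setting $m+1 = n$ (i.e.\ running the induction up to $m = n$) and noting $s(n) = s$ gives \meqref{eq:cor}.

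The only mild subtlety — and the step I would be most careful about — is the index relabeling between $g^{(m)}$ and $g^{(m+1)}$: the bits "shift up" by one position under the recurrence, so one must verify that the closed form $g_k = s_0 \oplus \cdots \oplus s_{n-1-k}$ is stable under $n \mapsto n+1$ together with the shift, and that the top bit stays pinned at $s_0$. This is purely a matter of tracking subscripts and carries no analytic content, so I do not anticipate any real obstacle; the whole corollary is essentially a one-line unwinding of Lemma~\ref{lemma} once the induction is set up correctly.
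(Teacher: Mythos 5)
Your proposal is correct and follows essentially the same route as the paper: the paper also unrolls the recurrence of Lemma~\ref{lemma} starting from $Z_1(s(1)) = s_0$ and invokes ``a simple induction argument, whose details we skip.'' You have simply supplied the skipped details, including the index relabeling in the inductive step, and your bookkeeping checks out.
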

\begin{proof}
	It is easy to see from  \meqref{eq:lemma} that 
	\begin{align*}
		Z_1(s(1)) &  = s_0  = g_{n-1}\\ 
		Z_2(s(2)) & = 2 Z_1(s(1)) + (s_0 \, \oplus \,  s_1) = 2 g_{n-1} + g_{n-2} \\ 
		Z_3(s(3)) & = 2 Z_2(s(2)) + (s_0 \, \oplus \,  s_1 \, \oplus s_2) = 2^2 g_{n-1} + 2 g_{n-2} + g_{n-3}. 
	\end{align*}
A simple induction argument, whose details we skip, and the observation that $s(n) = s$  (see \mref{sec:notation}), shows that
\[
	Z_n(s) = \sum_{k=0}^{n-1} \, g_k 2^k.
\]
This completes the proof. 
\end{proof}
Clearly, $ Z_n(s) $ computes the number of zero-crossings of the sequence $ \mathcal{S} $ (ref.\,\,\!\meqref{eq:def_sequence} and  \meqref{Eq:defn_zero_crossings}).
In the following, we give an example to illustrate the steps for computing $ Z_n(s) $.
\begin{example}\label{ex:two}
	Let $ n = 3 $ and $ s = 5 $ (or equivalently $ s = 101 $ as a binary string). 
We have $$ \mathcal{S} = \left((-1)^{s\cdot 0}, \,(-1)^{s\cdot 1},\, (-1)^{s\cdot 2},\, \ldots \,  ,\, (-1)^{s\cdot 7} \right) =  \left( \, 1, -1,  1, -1, -1,  1, -1,  1 \, \right) .$$ 
Let $ N= 2^n = 8 $. 
As $ s = 101 $, we have $ s_0 = 1 $, $ s_1 =0 $ and $ s_2 = 1 $. The number of zero-crossings of the sequence $ \mathcal{S} $ is given by 
\begin{align}\label{eq:eaxample_one}
	Z_3(s)) = Z_3(s(3)) = 2 Z_2 (s(2)) + (s_0 \, \oplus \, s_1 \, \oplus \, s_2) = 2 Z_2(s(2)) + (1 \, \oplus \, 0 \, \oplus \, 1) = 2 Z_2(s(2)).
\end{align}
We have, $ s(2) = 01 $. Therefore,
\begin{equation}\label{eq:example_two}
	Z_2(s(2)) = 2 Z_1(s(1)) + (s_0 \, \oplus \, s_1 \,) = 2 Z_1(s_0) + ( 1 \, \oplus \, 0 \,) = 2 (1) + 1 = 3.
\end{equation}
From \meqref{eq:eaxample_one}  and  \meqref{eq:example_two}  we get $ Z_3(s)) = 6 $, which is the same result that we obtained in Ex.\,\,\!$ \ref{ex:one} $. 
\end{example}

\section{Sequency ordered Walsh-Hadamard transforms} \label{sec:Sequency-ordered-WH-transforms}
In this section, we will briefly recall the Walsh basis functions in sequency and natural ordering. We will also discuss the Walsh-Hadamard transforms in sequency and natural ordering and describe a quantum circuit for performing the Walsh-Hadamard transforms in sequency ordering. Interested readers may refer to \cite{beauchamp1975walsh} for further details on Walsh basis functions, Walsh-Hadamard transforms, and their applications. 

\subsection{Walsh basis functions in sequency and natural ordering}
Walsh basis functions $ W_k (x) $ for $  k =0,~1,~2, ~\ldots~ N-1 $  in sequency order are defined as follows
\begin{align}
	W_0(x) &= 1 \quad \text{for } 0 \leq x \leq 1,  \\
	W_{2k} (x) &= W_k(2x) + (-1)^k W_k (2x -1 ),  \\
	W_{2k+1} (x) &= W_k(2x) - (-1)^k W_k (2x -1 ), \\
	W_k(x) &= 0 \quad \text{for } x < 0 \text{ and } x >1,
\end{align}
where $ N $ is an integer of the form $ N = 2^n$. 
For $ N =8 $ the Walsh functions in sequency order are shown in \mfig{fig_walsh_sequency}. 

\begin{figure}[htbp]
	\begin{center}
			\includegraphics[trim=100 225 100 225, clip]{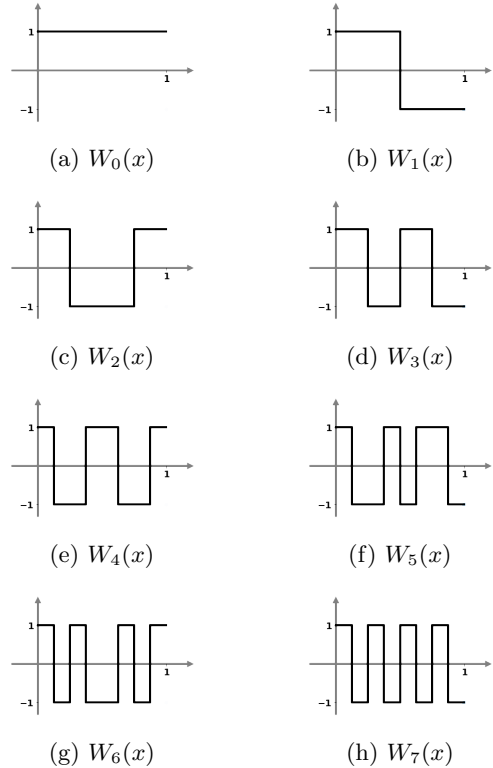}
		\caption{Walsh basis functions in the sequency ordering for $ N=8 $.} 	\label{fig_walsh_sequency}
	\end{center}
\end{figure}

The number of sign changes (or zero-crossings) for the Walsh functions increases as the orders of the functions increase. A vector of length $ N $ can be obtained by sampling a Walsh basis function. The Walsh-Hadamard transform matrix in sequency order is obtained by arranging the vectors obtained from sampling the Walsh basis functions as the rows of a matrix. The vectors are arranged in increasing order of sequency. The Walsh-Hadamard transform matrix of order $ N=8 $ in sequency order is
\begin{align*}
	\frac{1}{\sqrt{8}} \,
	\begin{pmatrix*}[r]
		1 & 1 & 1 & 1 & 1 & 1 & 1 & 1  \\
		1 & 1 & 1 & 1 & -1 & -1 & -1 & -1  \\
		1 & 1 & -1 & -1 & -1 & -1 & 1 & 1  \\
		1 & 1 & -1 & -1 & 1 & 1 & -1 & -1  \\
		1 & -1 & -1 & 1 & 1 & -1 & -1 & 1  \\
		1 & -1 & -1 & 1 & -1 & 1 & 1 & -1  \\
		1 & -1 & 1 & -1 & -1 & 1 & -1 & 1  \\
		1 & -1 & 1 & -1 & 1 & -1 & 1 & -1  \\
	\end{pmatrix*}.
\end{align*} 
In contrast, the Walsh-Hadamard transform matrix of order $ N =2^n $ in natural order is given by  
\begin{align*}
	H_N = H^{\otimes n},
\end{align*}  
where 
\begin{align*}
	H = \frac{1}{\sqrt{2}} \, \begin{pmatrix*}[r]
		1 & 1  \\
		1 & -1  \\
	\end{pmatrix*}.
\end{align*}
and $ N = 2^n $. 
\begin{figure}[htbp]
	\begin{center}
			\includegraphics[trim=100 225 100 225, clip]{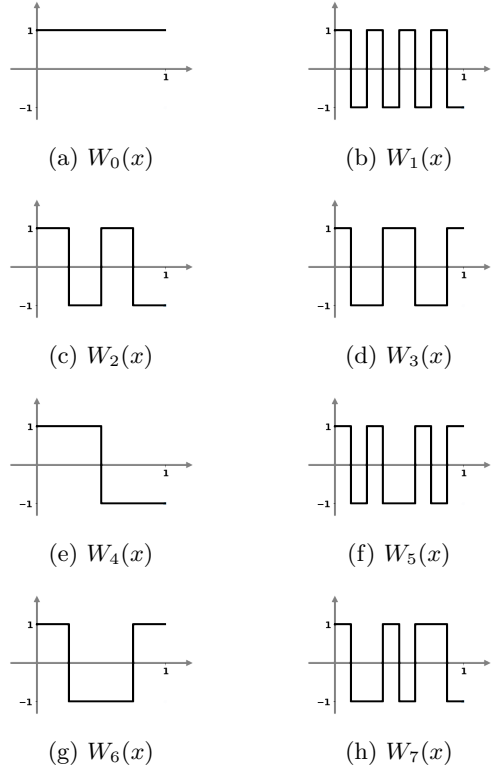}
		\caption{Walsh basis functions in the natural ordering for $ N=8 $.}	\label{fig_walsh_natural}
	\end{center}
\end{figure}
The Walsh-Hadamard matrix in natural order for $ N=8 $ is 
\begin{align*}
		\frac{1}{\sqrt{8}} \,
	\begin{pmatrix*}[r]
		1 & 1 & 1 & 1 & 1 & 1 & 1 & 1  \\
		1 & -1 & 1 & -1 & 1 & -1 & 1 & -1  \\
		1 & 1 & -1 & -1 & 1 & 1 & -1 & -1  \\
		1 & -1 & -1 & 1 & 1 & -1 & -1 & 1  \\
		1 & 1 & 1 & 1 & -1 & -1 & -1 & -1  \\
		1 & -1 & 1 & -1 & -1 & 1 & -1 & 1  \\
		1 & 1 & -1 & -1 & -1 & -1 & 1 & 1  \\
		1 & -1 & -1 & 1 & -1 & 1 & 1 & -1  \\
	\end{pmatrix*}.
\end{align*}
Walsh basis functions (or Hadamard-Walsh basis functions) in natural order can be obtained using the rows of the Walsh-Hadamard matrix $ H_N$. For $ N =8 $, the Walsh functions in natural order are shown in \mfig{fig_walsh_natural}.

The Walsh-Hadamard transforms in natural and sequency orders can also be defined in terms of their actions on the computational basis vectors. 
Let $ N=2^n $ be a positive integer. Let $ V$ be the $ N $ dimensional complex vector space generated by the computational basis states $ \{ \ket{0}, \, \ket{1}, \, \ldots \,,\, \ket{N-1} \} $.  We note that the Walsh-Hadamard transform in natural order can be defined as a linear transformation $ H_N : V \to V $ such that the action of $ H_N = H^{\otimes n}$ on the computational basis state $ \ket{j} $, with $ 0 \leq j \leq N-1 $ is given by
\begin{equation}\label{eq:natual:hadamard}
	H_N \, \ket{j} = \frac{1}{\sqrt{N}} \sum_{k=0}^{N-1} \, (-1)^{k \cdot j} \, \ket{k}. 
\end{equation}
Here $ k \cdot j $ denotes bit-wise dot product of $ k $ and $ j$. 

Next, we note that the Walsh-Hadamard transform in sequency order can be defined as a linear transformation $ H_S : V \to V $ acting on the  basis state $ \ket{j} $, with $ 0 \leq j \leq N-1 $, as follows (see \cite{beauchamp1975walsh}). 
\begin{equation}\label{eq:sequency:hadamard}
	H_S \, \ket{j} = \frac{1}{\sqrt{N}} \sum_{k=0}^{N-1} \, (-1)^{ \sum_{r=0}^{n-1} \, k_{n-1-r} (j_r \oplus j_{r+1}) } \, \ket{k},
\end{equation}
where $ k = k_{n-1}\,k_{n-2}\,\ldots \, k_1\, k_0 $ and $ j =  j_{n-1}\,j_{n-2} \,\ldots \, j_1\, j_0 $, are binary representations of $ k $ and $ j $ respectively, with $ k_i,\, j_i \in \{0,1\} $ for $ i=0,\,1,\, \ldots ,\,n-1$, and $ j_{n} = 0 $.

\subsection{Quantum circuit for performing the sequency ordered Walsh-Hadamard transforms}%

\begin{figure}
	\centering
	\includegraphics{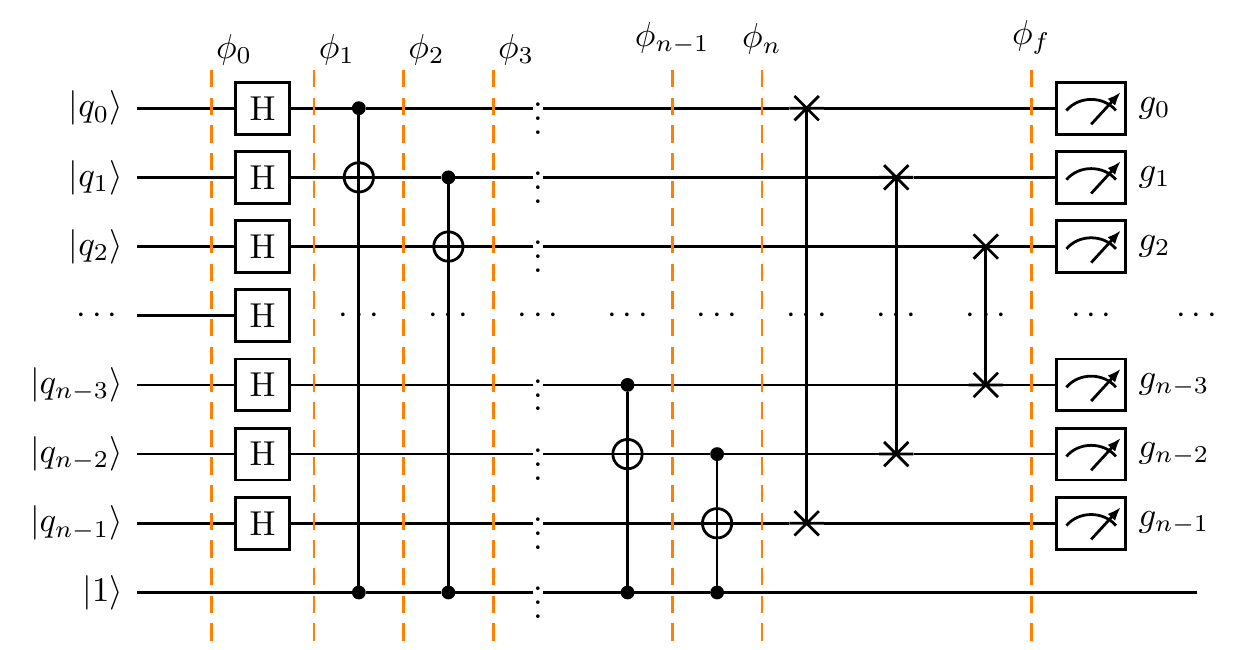}
	\caption{Quantum circuit for computing the Walsh-Hadamard transform in sequency ordering.} \label{fig:sequency}
\end{figure}

A slight variation of the quantum circuit discussed earlier in  \mref{sec:qunatum_solution} can be used for computing the Walsh-Hadamard transform in sequency order. A schematic quantum circuit for performing the Walsh-Hadamard transform in sequency order is shown in \mfig{fig:sequency}.

Since the ancilla qubit remains in the state  $ \ket{1} $ throughout the execution of the circuit, we will ignore this in the following computation and consider only the first $ n $ qubits from $ \ket{q_0} $ to $ \ket{q_{n-1}} $ (as shown in  \mfig{fig:sequency}). 
In order to show that the quantum circuit in \mfig{fig:sequency} computes the Walsh-Hadamard transforms in sequency ordering, we will compute the output when the input is a computational basis state $ \ket{\phi_0} =  \ket{j} $, with $ 0 \leq j  \leq N-1 $ and $ N =2^n $. 
We have 
\begin{align}
	\ket{\phi_1 } = H^{\otimes n} \ket{\phi_0} &=  \frac{1}{\sqrt{N}}  \sum_{s=0}^{N-1}   \, (-1)^{j \cdot s} \, \ket{s},  \nonumber \\
	&= \frac{1}{\sqrt{N}}  \sum_{s=0}^{N-1}   \,  \, (-1)^{ \sum_{k=0}^{n-1} \, j_k s_k} \, \ket{s}. \label{eq:phif}
\end{align}
Similar to our discussion in \mref{sec:algorithm}, the remaining part of the circuit acts on $ \ket{s} $ to give $ \ket{g} $, where $ g = \sum_{k=0}^{n-1} \, g_k 2^k $ and 
$ g_{n-1} = s_0 $, $  g_k = \, s_0 \, \oplus \, s_1 \, \oplus \, \ldots   \, \oplus \, s_{n-1-k}$, for $ k = n-2, \, \ldots, \, 1,\, 0 $ (see  \meqref{eq:gk}  and \meqref{eq:g}). 
Further, it is easy to see that 
\begin{equation}\label{eq:stog}
	s_k = g_{n-k} \, \oplus \, g_{n-k-1}, \quad k=0,\, 1,\, \ldots,\, n-1,  
\end{equation}
where it is assumed that $ g_n = 0 $. Also, the map sending $ \ket{s} $ to $ \ket{g} $ defined in (see \meqref{eq:gk}  and  \meqref{eq:g}) is invertible. It can be easily checked and it also follows from the fact that the transformation that sends  $ \ket{s} $ to $ \ket{g} $ is implemented by unitary quantum gates. Therefore, the summation that runs through $ s=0 $ to $ s= n-1 $ in   \meqref{eq:phif}  can be replaced by the summation running through $ g =0 $ to $ g=n-1 $. It follows that
\begin{align} \label{eq:phif_final}
	\ket{\phi_f }  &= \frac{1}{\sqrt{N}}  \sum_{g=0}^{N-1}   \, (-1)^{\sum_{k=0}^{n-1} \, j_k (g_{n-k}  \, \oplus \,  g_{n-k-1} )}   \, \ket{g}  \nonumber \\
	&= \frac{1}{\sqrt{N}}  \sum_{g=0}^{N-1}   \, (-1)^{\sum_{r=0}^{n-1} \, j_{n-1-r} (g_{r} \,  \oplus \, g_{r+1} )}   \, \ket{g}. 
\end{align}
From  \meqref{eq:sequency:hadamard} and  \meqref{eq:phif_final}  it follows that the quantum circuit shown in \mfig{fig:sequency} can be used for computing the Walsh-Hadamard transform in sequency order. 

	\section{Conclusion}\label{sec:conclusion}
		We proposed a zero-crossings counting problem that generalized the Bernstein–Vazirani problem. The problem asked to determine the number of zero-crossings (or sign changes) in the sequence $ \mathcal{S} $ defined in \meqref{eq:def_sequence}. We presented a quantum algorithm (see Algorithm~\ref{alg_Zero_Counting}) to solve the zero-crossings counting problem. The proposed quantum algorithm requires only one oracle query to determine the number of zero-crossings (or sign changes) in the sequence $ \mathcal{S} $.  In comparison, a classical algorithm would need at least $ n $ oracle queries, where $ 2^n $ is the size of the sequence $ \mathcal{S} $. 
		
		We also presented a quantum circuit for obtaining the Walsh-Hadamard transforms in sequency ordering. The Walsh-Hadamard transform in sequency ordering is used in many scientific and engineering applications, including in digital signal and image processing~\cite{kuklinski1983fast,zarowski1985spectral},  cryptography~\cite{lu2016walsh}, solution of non-linear ordinary differential equations and partial differential equations~\cite{beer1981walsh,ahner1988walsh,gnoffo2014global, gnoffo2015unsteady, gnoffo2017solutions}, optics, etc. In many image processing applications, using  the Walsh-Hadamard transforms in sequency ordering offers advantages in comparison to  the Walsh-Hadamard transforms in natural ordering. Hence, the quantum circuit shown in \mfig{fig:sequency} may find use in future quantum computing applications employing the Walsh-Hadamard transforms in sequency ordering.

%
%

				\bibliographystyle{unsrt}

	\end{document}